\documentclass[a4paper,USenglish,cleveref,autoref,thm-restate]{lipics-v2021}

\hideLIPIcs

\usepackage{stmaryrd}
\usepackage{xspace}
\usepackage{mathtools}
\usepackage{tikz}
\usepackage{pgfplots}
\usepackage{bm}

\pgfplotsset{compat=1.18}

\definecolor{blueLink}{rgb}{0,0.2,0.8}
\hypersetup{colorlinks,linkcolor=blueLink,urlcolor=blueLink,citecolor=blueLink}

\newtheorem{algo}{Algorithm}

\DeclarePairedDelimiter{\ceil}{\lceil}{\rceil}

\newcommand{\ALG}{\textsc{Alg}\xspace}
\newcommand{\OPT}{\textsc{Opt}\xspace}
\newcommand{\OFF}{\textsc{Off}\xspace}
\newcommand{\DET}{\textsc{Det}\xspace}
\newcommand{\GREEDY}{\textsc{Greedy}\xspace}
\newcommand{\E}{\mathbf{E}}
\newcommand{\detlog}{\textsc{Exp}\xspace}

\DeclareFontShape{T1}{lmr}{m}{scit}{<->ssub * lmr/m/scsl}{}
\DeclareFontShape{T1}{lmr}{bx}{sc}{<->ssub * lmr/b/n}{}
\input{t1lmss.fd}
\DeclareFontShape{T1}{lmss}{m}{sc}{<->ssub * lmr/m/sc}{}
\input{Ustmry.fd}
\DeclareFontShape{U}{stmry}{m}{n}{<->stmary10}{}
\DeclareFontShape{U}{stmry}{b}{n}{<->ssub * stmry/m/n}{}

\title{Competitive Transaction Admission in PCNs: Online Knapsack with Positive and Negative Items}
\titlerunning{Competitive Transaction Admission in PCNs}

\author{Marcin Bienkowski}{University of Wrocław, Poland}{marcin.bienkowski@cs.uni.wroc.pl}{}{}
\author{Julien Dallot}{TU Berlin, Germany}{judafa@protonmail.com}{}{}
\author{Dominik Danelski}{TU Berlin, Germany}{dominik@danelski.pl}{}{}
\author{Maciej Pacut}{TU Berlin, Germany}{maciej@inet.tu-berlin.de}{}{}
\author{Stefan Schmid}{TU Berlin and Weizenbaum Institute, Germany}{stefan.schmid@tu-berlin.de}{}{}

\authorrunning{M. Bienkowski, J. Dallot, D. Danelski, M. Pacut and S. Schmid}

\ccsdesc[500]{Theory of computation~Online algorithms}
\ccsdesc[500]{Networks~Network protocol design}

\keywords{Payment Channel Networks, Online Algorithms} 

\funding{This project was supported by the German Research Foundation (DFG), SPP 2378 (ReNO-2), 2025--2029, and by the Polish National Science Centre grant 2022/45/B/ST6/00559}

\begin{document}
\nolinenumbers

\maketitle

\begin{abstract}
  Payment channel networks (PCNs) are a promising approach to making cryptocurrency transactions faster and more scalable.
  At their core, PCNs bypass the blockchain by routing transactions through intermediary channels.
  However, a channel can forward a transaction only if it has the necessary funds: the problem of keeping the channels balanced is a current bottleneck for the PCN's transaction throughput.

  This paper considers the problem of maximizing the number of transactions accepted by a~channel in a PCN.
  Previous works either considered the associated optimization problem with all transactions known in advance or developed heuristics tested on particular transaction datasets.
  This work, however, considers the problem in its purely online form where the transactions are arbitrary and revealed one after the other.

  We show that the problem can be modeled as a new online knapsack variant where the items (transaction proposals) can be either positive or negative depending on the direction of the transaction.
  The main contribution of this paper is a deterministic online algorithm that is $O(\log B)$-competitive, where $B$ is the knapsack capacity (maximum allowed channel balance).
  We complement this result with an asymptotically matching lower bound of $\Omega(\log B)$ which holds for any randomized algorithm, demonstrating our algorithm's optimality.
\end{abstract}


\section{Introduction}

Blockchain-based decentralized cryptocurrencies suffer from well-documented scalability issues~\cite{DBLP:conf/fc/CromanDEGJKMSSS16,8431962,DBLP:conf/atal/JainSG21} that hinder their adoption for daily payments.
The promising solutions for the blockchain scalability problem are Payment Channel Networks (PCN)~\cite{poon2015bitcoin,Raiden}, which allow users to bypass the blockchain and perform private peer-to-peer transactions.
A~PCN consists of a set of payment channels that allow its two end-users to perform fast, off-chain transactions to one another.
The blockchain is called for an initial funding transaction at a~channel's opening and the two end-users can afterwards freely transact as long as the balance remains within the allocated funds.
In general, two users willing to perform a fast transaction do not share a common channel: a typical transaction is rather routed through the network from channel to channel, each user forwarding the payment along the routing path.

A PCN crucially needs to keep its set of channels balanced to be able to route transactions with high throughput.
An unbalanced channel otherwise becomes unidirectional and requires a~costly call to the blockchain to regain liquidity.
Efficiently balancing the channels to maximize throughput is considered the main bottleneck of PCNs and the subject of intensive research.
Existing works either designed routing heuristics to minimize channel balance~\cite{DBLP:conf/mascots/AwathareSARB21,DBLP:journals/comcom/MercanEA21,DBLP:conf/nsdi/SivaramanVRNYMF20} or considered the associated optimization problem with full knowledge of the incoming requests~\cite{DBLP:conf/wdag/ChatterjeeK0SY25,DBLP:conf/sirocco/SchmidSY23}. 

In this paper, we consider the problem of maximizing the transaction throughput (the number of performed transactions) of a channel in a PCN such as the Lightning network~\cite{poon2015bitcoin}.
As in~\cite{DBLP:conf/sirocco/SchmidSY23}, we solve the problem locally for a single channel.
Unlike previous works, however, we solve the problem in the purely online setting where the incoming transaction proposals are arbitrary and unknown to the algorithm until revealed. As we demonstrate in this paper, commonly-used greedy heuristics that accept every transaction proposal that fits in the channel are inherently suboptimal; see \autoref{fig:greedy} for an illustration.

\begin{figure}[t]
  \centering
  
\usetikzlibrary{patterns, patterns.meta}
\pgfmathsetmacro{\limitlinesleft}{-2}
\pgfmathsetmacro{\onerowheight}{4.3}
\pgfmathsetmacro{\tworowheight}{3.2}
\pgfmathsetmacro{\threerowheight}{2.2}
\pgfmathsetmacro{\fourrowheight}{2.0}
\pgfmathsetmacro{\fiverowheight}{0}
\pgfmathsetmacro{\sixrowheight}{-0.2}
\begin{tikzpicture}[scale=0.8, >=stealth]

\node at (-0.4, \onerowheight/2 + \tworowheight/2) {\footnotesize Transactions};
\node at (-0.2, \tworowheight*0.72 + \threerowheight*0.28) {\footnotesize Accept $\checkmark$};
\node at (-0.4, \tworowheight*0.28 + \threerowheight*0.72) {\footnotesize or Reject \bm{$\times$}};
\node at (-0.5, \fourrowheight*0.68 + \sixrowheight*0.32) {\footnotesize Balance};
\node at (-0.5, \fourrowheight*0.46 + \sixrowheight*0.54) {\footnotesize ($B = 10$)};

\draw (\limitlinesleft, \sixrowheight) -- (\limitlinesleft, \onerowheight);
\draw (1, \sixrowheight) -- (1, \onerowheight);
\foreach \x in {2,...,8} {
  \draw (\x, 2.2) -- (\x, \onerowheight);
}
\draw (9, \sixrowheight) -- (9, \onerowheight);

\draw (\limitlinesleft, \onerowheight) -- (9, \onerowheight); 
\draw (\limitlinesleft, \tworowheight) -- (9, \tworowheight); 
\draw (\limitlinesleft, \threerowheight) -- (9, \threerowheight); 
\draw (1, \fourrowheight) -- (9, \fourrowheight); 
\draw (1, \fiverowheight) -- (9, \fiverowheight); 
\draw (\limitlinesleft, \sixrowheight) -- (9, \sixrowheight); 

\node at (1.5, \onerowheight/2 + \tworowheight/2) {\footnotesize \bm{$+3$}};
\node at (2.5, \onerowheight/2 + \tworowheight/2) {\footnotesize \bm{$-2$}};
\node at (3.5, \onerowheight/2 + \tworowheight/2) {\footnotesize \bm{$-5$}};
\node at (4.5, \onerowheight/2 + \tworowheight/2) {\footnotesize \bm{$+14$}};
\node at (5.5, \onerowheight/2 + \tworowheight/2) {\footnotesize \bm{$+1$}};
\node at (6.5, \onerowheight/2 + \tworowheight/2) {\footnotesize \bm{$+1$}};
\node at (7.5, \onerowheight/2 + \tworowheight/2) {\footnotesize \bm{$+1$}};
\node at (8.5, \onerowheight/2 + \tworowheight/2) {\footnotesize \bm{$+1$}};

\node at (1.5, 2.75) {$\checkmark$};
\node at (2.5, 2.75) {$\checkmark$};
\node at (3.5, 2.75) {$\checkmark$};
\node at (4.5, 2.75) {$\checkmark$};
\node at (5.5, 2.75) {\bm{$\times$}};
\node at (6.5, 2.75) {\bm{$\times$}};
\node at (7.5, 2.75) {\bm{$\times$}};
\node at (8.5, 2.75) {\bm{$\times$}};

\path[pattern={Lines[angle=50, yshift=.5pt]}, pattern color=black] (1, \fourrowheight) rectangle (9, \fourrowheight+0.2);
\path[pattern={Lines[angle=-50, yshift=.5pt]}, pattern color=black] (1, \fiverowheight) rectangle (9, \fiverowheight-0.2);

\pgfmathsetmacro{\maxheight}{1.92}
\draw[dashed] (1, 1) -- (9, 1); 
\draw[->, very thick] (1,1.0) -- (2,1.3);
\draw[->, very thick] (2,1.3) -- (3,1.1);
\draw[->, very thick] (3,1.1) -- (4,0.6);
\draw[->, very thick] (4,0.6) -- (5,\maxheight);
\draw[->, very thick] (5,\maxheight) -- (6,\maxheight);
\draw[->, very thick] (6,\maxheight) -- (7,\maxheight);
\draw[->, very thick] (7,\maxheight) -- (8,\maxheight);
\draw[->, very thick] (8,\maxheight) -- (9,\maxheight);

\end{tikzpicture}
\caption{
  A run of the algorithm \GREEDY on a single channel, for the sequence of transaction proposals shown in the first row.
  The second row gives the decision of \GREEDY for each transaction proposal, and the third row plots the channel balance after each decision.
  Assuming the sequence stops after eight transaction proposals, \GREEDY is suboptimal as it accepts the fourth proposal of size $+14$, which prevents it from accepting the last four transaction proposals of size $+1$.
  The channel balance has to be between $-B = -10$ and $B = 10$.
}
\label{fig:greedy}
\end{figure}

In contrast, we present a deterministic online algorithm for this problem and show that it is $O(\log B)$-competitive, where $B$ is the maximum allowed balance of the single channel divided by two.
We then show a~lower bound of $\Omega(\log B)$ for all algorithms, including randomized ones, making our algorithm asymptotically optimal.

We cast the problem as a new online knapsack variant (see, e.g., \cite{Lueker95,DBLP:conf/icalp/Marchetti-SpaccamelaV87}) where the items (transaction proposals) can have positive or negative sizes (transaction amounts) depending on the direction of the transaction.
The knapsack capacity corresponds to the maximum allowed balance of the single channel.
Using signed items in knapsack problems is new.
If we omit negative items, however, our model is similar to that of Zhou, Chakrabarty and Lukose~\cite{chakrabarty2008online} in that all our items have an (absolute) size within a known interval (as in the Lightning protocol, where channels broadcast their minimum and maximum accepted transaction amounts). However, the presence of negative items and potentially unbounded input length makes the results of~\cite{chakrabarty2008online} inapplicable to our setting.


\section{Model and Notations}
\label{sec:model}

\subparagraph*{Knapsack and its Relation to PCN}
We consider an online version of the knapsack problem with items that can have both positive and negative sizes.
Real-valued item sizes $\sigma_1, \sigma_2, \dots$ arrive one by one in an online fashion, and the task is to decide to either accept or reject each item when it arrives without knowledge of future items.
We assume that the absolute value of any item size is from interval $[1, m]$, where $m$ is a value known to an online algorithm.

Let $B$ denote the \emph{capacity} of the knapsack.
At any time, we define the \emph{state} of the knapsack as a value $s \in \mathbb{R}$ which evolves as items are accepted.
Initially, the knapsack's state is $s_0 \in [-B, B]$; unless stated otherwise, we assume that $s_0 = 0$.
Upon arrival of an~item~$\sigma_i$, the knapsack's state updates as $s \leftarrow s + \sigma_i$ in case the algorithm accepted the item, or stays unchanged otherwise.

The goal is to maximize the number of accepted items while maintaining the knapsack's state within the interval $[-B, B]$.

\subparagraph*{Modeling Payment Channel Networks}
Our knapsack model reflects the case of a single channel in a Payment Channel Network (PCN) such as the Lightning~\cite{poon2015bitcoin} or Raiden~\cite{Raiden} networks.
In a PCN, two users can establish a \emph{payment channel} to perform transactions without the need for on-chain confirmations.
The two end-users may perform transactions freely through the channel as long as the remaining allocated funds both remain positive. 
Those transactions may be for private purposes between the two end-users; however, the majority of the transactions come from outside of the channel.
In that case, the transaction represents one hop as part of a routing path between distant users of the network.
Over time, the PCN computes routing paths which may include the considered channel, and when this happens, the channel can either accept or reject the transaction proposal.

In our model, the sequence of items represents the transaction proposals coming from the rest of the network.
The restriction on the absolute sizes of the items in $[1, m]$ models the minimum and maximum accepted transaction sizes that the channels individually decide and broadcast to the PCN (e.g., as done by the Lightning protocol~\cite{poon2015bitcoin}).
Note that the minimum size of $1$ implies no loss of generality as the items, the knapsack capacity and our algorithm (its acceptance curve) can be easily re-scaled to fit any positive minimum size: the parameter~$m$ can then be interpreted as the upper bound on the ratio between maximum and minimum items sizes.
Note that this minimum size cannot be arbitrarily small as cryptocurrencies typically come with a smallest possible transaction amount (the Satoshi for Bitcoin and therefore the Lightning network): the channels in our dataset usually set the minimum accepted item to $1$~Satoshi.
Finally, our model handles the case when both end-users do not have the same initial funds by setting a non-zero initial balance $s_0$.

\subparagraph*{Competitive analysis}
We explain here the competitive analysis framework in its maximization form; this measure is discussed at length in the book of Borodin and El-Yaniv~\cite{DBLP:books/daglib/0097013}.
For each input $\sigma$, let $\ALG(\sigma)$ denote the total gain of the online algorithm (the number of accepted items) and $\textsc{Opt}(\sigma)$ denote the total gain of an optimal offline solution.
A~deterministic online algorithm is said to be $c$-competitive if there exists a parameter $\beta$, such that for any item sequence $\sigma$ it holds that 
\begin{equation}
    \label{eq:comp_def}
    c \cdot \ALG(\sigma) \geq \OPT(\sigma) - \beta.
\end{equation}
The parameter $\beta$ can depend on other model parameters such as $B$ or $m$, but cannot depend on input $\sigma$.
For randomized algorithms, in the definition above, we replace $\ALG(\sigma)$ by its expectation (taken over random choices of the algorithm).
A competitive ratio is said to be \emph{strict} when $\beta=0$.


\section{Contributions and Technical Novelty}
\label{sec:contributions}

Our main contribution is an asymptotically optimal competitive ratio for the transaction admission problem.
Our algorithm has a~competitive ratio of $O(\log B)$ (\autoref{thm:det-logB}), where $B$ is the knapsack capacity (channel balance has to be between $-B$ and $B$).
For this guarantee to hold, we require a mild technical assumption on item sizes: all items must have an absolute size in $[1, m]$ where $m \le b$ and $b = B / \ln B$.
The algorithm is simple, efficient, and uses no memory: the algorithm accepts or rejects each incoming proposal solely by comparing its amount to a carefully chosen threshold depending on the current balance (see \autoref{alg:det_logB} and its illustration in \autoref{fig:exp_curve}).

We complement our result with a lower bound of $\Omega(\log B)$ (\autoref{thm:rand_lb}), proving the asymptotic optimality of our algorithm.
Note that our lower bound is formulated as $\Omega(\log m)$ where $m$ is the largest absolute size of an~item; this bound is matching, for instance, when  $m = b = B / \ln B$ (as allowed by our algorithm).
This lower bound applies even to randomized algorithms, making our (deterministic) algorithm optimal across the standard frameworks of competitive analysis.

From a theoretical perspective, our results improve on the state-of-the-art of the classic online knapsack by generalizing it to signed item sizes.
This generalization deeply affects the problem as its input length becomes inherently unbounded, which requires a new set of tools.

In addition, we show a worst-case input sequence for the widely-used greedy algorithm~\cite{DBLP:journals/comcom/MercanEA21,DBLP:journals/corr/abs-2109-11665} which shows an exponential gap between the competitive ratio of our algorithm and that of greedy~(\autoref{thm:greedy_lineary_cr}).

Finally, our algorithm shows very good performance in simulations (see~\autoref{sec:evaluations}).
We show that its performance is equivalent to the commonly-used greedy algorithm, both on a~single, isolated channel, and in snapshot topologies of the Lightning network against daily, non-adversarial transactions.
Our theoretical results, however, ensure optimal performance guarantees even in adversarial environments, where the greedy algorithm performs poorly.


\section{Lower Bound on Greedy}
\label{sec:lb}

\GREEDY~\cite{DBLP:journals/comcom/MercanEA21,DBLP:journals/corr/abs-2109-11665} is a widely-used algorithm that accepts every incoming item that fits in the knapsack.
This section establishes that the greedy algorithm has a~competitive ratio of~$\Omega(m)$.
Recall that $B$ is the size of the knapsack, that is, the knapsack's state must stay in the interval $[-B, B]$, and $m \in [1, B]$ is the upper bound on the absolute item size.

\begin{theorem}
  \label{thm:greedy_lineary_cr}
  The competitive ratio of \GREEDY is $\Omega(m)$.
\end{theorem}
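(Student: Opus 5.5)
We build an adversarial input, repeated in phases, on which \GREEDY accepts $O(1)$ items per phase while \OPT accepts $\Omega(m)$ items per phase. Since the number of phases can be made arbitrarily large, no additive constant $\beta$ can rescue \GREEDY, so the bound holds for the non-strict ratio as well.

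A single phase mirrors \autoref{fig:greedy}. We assume the phase starts from state $0$ for both \GREEDY and \OPT; the phase returns both to that state, so this is harmless. The phase consists of the following items, in order.
\begin{enumerate}
\item Items of size $+m$ until \GREEDY's state exceeds $B-m$. There are about $B/m$ such items, and \GREEDY accepts all of them. Its state is now some $s_G\in(B-m,B]$.
\item Items of size $+1$. The adversary issues exactly $\lfloor B-s_G\rfloor$ of these, which \GREEDY accepts. This brings \GREEDY's state $s_G$ into $(B-1,B]$, so any further $+1$ item is rejected by \GREEDY.
\item About $m$ items of size $+1$. \GREEDY rejects all of them.
\item Items of size $-m$ (and possibly a few $-1$'s) that bring \GREEDY back to state $0$. \GREEDY accepts all of them.
\end{enumerate}

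This phase is not yet good enough. \GREEDY accepts $\Theta(B/m)$ items in the $+m$ part, and \OPT could accept those too, so the ratio is not yet $\Omega(m)$. The fix is to keep the big items few. Let \OPT sit near state $0$ and \GREEDY near the top, $B$, from the start.

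Concretely, a first prefix of roughly $B/m$ items of size $+m$ pushes \GREEDY close to $B$. \OPT rejects these. This prefix costs only an additive constant $\beta$ that depends on $B$ and $m$, which the definition of competitiveness allows. From then on, each phase is as follows.
\begin{enumerate}
\item Send a $-m$ item. \GREEDY accepts it, moving down to about $B-m$.
\item Send a $+m$ item. \GREEDY accepts it, returning to its top position.
\item Send $m$ items of size $+1$. \GREEDY rejects them all because it is at the top. \OPT accepts them all.
\item Send $m$ items of size $-1$. \OPT accepts them too, returning to state $0$. \GREEDY accepts these, but we must control that.
\end{enumerate}

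Step 4 is the problem. \GREEDY would accept the $-1$'s and then have room again. A better arrangement is for \OPT to alternate, accepting the $+1$'s and $-1$'s as an alternating sequence $+1,-1,+1,-1,\dots$, while \GREEDY is blocked on both sides. But \GREEDY at the top cannot be blocked from $-1$'s.

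Therefore the intended construction uses the \emph{top and bottom} together. The $-1$ items should arrive when \GREEDY is at $-B$, and the $+1$ items when \GREEDY is at $+B$. \GREEDY is moved between the two extremes by large items of size $\pm m$, of which there are about $2B/m$ per swing. The phase is therefore the following.
\begin{enumerate}
\item $2B/m$ items of size $+m$. \GREEDY goes from $-B$ to $B$ and accepts all of them.
\item $K$ items of size $+1$. \GREEDY rejects them all; \OPT accepts them.
\item $2B/m$ items of size $-m$. \GREEDY goes back to $-B$ and accepts all of them.
\item $K$ items of size $-1$. \GREEDY rejects them all; \OPT accepts them.
\end{enumerate}
\OPT rejects all the big items and stays in $[0,K]$, so it needs $K\le B$. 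Take $K=B$. Then per phase \GREEDY gains $4B/m$ and \OPT gains $2B$, a ratio of $m/2=\Omega(m)$.

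The main obstacle is rounding: ensuring that \GREEDY lands within less than $1$ of the boundary so that every $\pm1$ item is rejected. We handle this by padding with a few $\pm1$ items that \GREEDY accepts. This adds only $O(1)$ to \GREEDY's count per phase, and \OPT may take or skip these as convenient. Repeating the phase indefinitely defeats any additive constant $\beta$ and gives the claimed bound of \autoref{thm:greedy_lineary_cr}.
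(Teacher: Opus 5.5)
Your final construction matches the paper's. Big items swing \GREEDY from one boundary to the other. A block of about $B$ unit items then arrives while \GREEDY sits at the boundary, so \GREEDY rejects them while \OPT, which skips the big items, collects them. (Your \OPT oscillates in $[0,B]$ rather than between $-B$ and $B$, which only loses a factor $2$.)

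The gap is in the step you flag as the main obstacle, rounding. Your claim that padding with ``a few'' $\pm1$ items adds only $O(1)$ to \GREEDY's count per phase is false. After the last item of size $m$ that fits, \GREEDY's state $s_G$ can lie anywhere in $(B-m,B]$. Bringing it into $(B-1,B]$ with unit items can therefore cost up to about $m$ accepted items per half-phase. This is dominated by the $\Theta(B/m)$ big items only when $m = O(\sqrt{B})$.

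For example, take $m = \tfrac{7}{10}B$. From $-B$, \GREEDY accepts two big items and reaches $\tfrac{4}{10}B$, and then needs about $\tfrac{6}{10}B$ unit padding items; the same happens on the way down. Per half-phase about $\tfrac{16}{10}B + O(1)$ items are issued and \GREEDY accepts about $\tfrac{6}{10}B$ of them. So even an \OPT that takes every item achieves a ratio below $3$, although $m = \Theta(B)$ is unbounded. As written, the argument does not give $\Omega(m)$ over the whole range $m \in [1,B]$.

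The repair is easy but necessary; there are two options.
\begin{itemize}
\item Pad with a single item of size exactly $B - s_G$ whenever this is at least $1$. This is a legal size since it is less than $m$, and it puts \GREEDY exactly on the boundary at a cost of one item.
\item As the paper does, replace $m$ by $m' = B/\lceil B/m\rceil \in [m/2,m]$. Then $\lceil B/m\rceil$ items of size $m'$ take \GREEDY from $0$ exactly to $B$, and $2\lceil B/m\rceil$ of them take it exactly from one boundary to the other, with no rounding at all.
\end{itemize}
Either way \GREEDY gains $O(B/m)$ items per half-phase (using $m \le B$) against $\Theta(B)$ for \OPT, which gives $\Omega(m)$.

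Two smaller points also need stating. You should spell out the initial segment moving \GREEDY from $s_0 = 0$ to a boundary, which plays the role of the paper's Phase~0. You should also use $\lfloor B \rfloor$ unit items when $B$ is not an integer.
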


\begin{proof}
    We start with an initial knapsack's state $s_0 = 0$.
    Let $d = \ceil{B/m}$ and $m' = B/d$. Note that $m' \in [1,m]$ and $m' = \Omega(m)$. 
    We will show a lower bound of $m'$ on the competitive ratio of \GREEDY.
    
    We construct an input sequence $\sigma$ in successive phases.
    \begin{itemize}
    \item
        \textbf{Phase 0.}
        Issue $d$ items of size $m'$ followed by a sequence of $B$ items of size $1$.
    \item
        \textbf{Odd phases.}
        Issue $2d$ items of size $-m'$ followed by a sequence of $2B$ items of size~$-1$.
    \item
        \textbf{Even phases.}
        Issue $2d$ items of size $m'$ followed by a sequence of $2B$ items of size $1$.
    \end{itemize}
  \GREEDY accepts the first $d$ items of phase $0$ and accepts the first $2d$ items of each phase afterwards.
  Now consider an (offline) algorithm \OFF that rejects the items accepted by \GREEDY and vice versa.
  \OFF accepts $B$ items during phase $0$ and $2B$ items in each phase afterwards.
  By summing up over all phases of $\sigma$, we obtain $\GREEDY(\sigma) = (B/d) \cdot \OFF(\sigma) =  m' \cdot \OFF(\sigma) \leq m' \cdot \OPT(\sigma)$.

  By increasing the number of phases, we can ensure that $\OPT(\sigma)$ is arbitrarily large. This implies that the lower bound of $m'$ holds even if we allow an arbitrary additive constant in the competitive ratio definition~\eqref{eq:comp_def}.
\end{proof}


\section{Deterministic Online Algorithm}
\label{sec:det}

In this section, we present an online deterministic algorithm that is $O(\log B)$-competitive when $B \geq 4.1$ and $m \leq b$ (where $b = B/\log B$).

\begin{algo}[\detlog]
  \label{alg:det_logB}
  We define the function $f(x) = b \cdot \exp(-|x|/b)$.
  Let $\sigma_i$ be the incoming item and $s$ be the current knapsack state.
  Accept $\sigma_i$ if $\sigma_i$ and $s$ have opposite signs or~$|\sigma_i| \le f(s)$.
\end{algo}

\begin{figure}[t]
\centering
\begin{tikzpicture}
  \begin{axis}[
      axis lines = middle,
      xlabel = \(s\),
      ylabel = \(f(s)\),
      xmin = -105,
      xmax = 105,
      ymin = 0,
      ymax = 23,
      samples = 300,
      smooth,
      scale only axis=true,
      width=8.7cm,
      height=5cm,
      grid = both,
      legend pos = north west,
    ]
    \addplot[blue, thick, domain=-105:105] {((100 - 1)/ln(100)) * exp(-abs(x) * ln(100)/(100 - 1))};
    \addlegendentry{\( f(s) = b \cdot e^{- |s| / b}\)}
  \end{axis}
\end{tikzpicture}
\caption{
  The acceptance curve of the \detlog algorithm for $B = 100$.
  The x-axis is the knapsack state, the y-axis is the absolute size of an item.
  \detlog accepts an item if either that item and the knapsack's state have different signs or else if the corresponding point is below the curve.
}
\label{fig:exp_curve}
\end{figure}

\subparagraph*{Correctness}
We first discuss the correctness of \autoref{alg:det_logB}.

\begin{theorem}
  \label{thm:correctness}
  \autoref{alg:det_logB} maintains the knapsack's state within $[-B, B]$ when $B \ge 4.1$.
\end{theorem}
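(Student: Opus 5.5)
The proof is by induction over the sequence of items. The invariant is that the state $s$ stays in $[-B,B]$. By the symmetry $f(x)=f(-x)$, and because the rule is symmetric under $\sigma_i\mapsto-\sigma_i$, $s\mapsto -s$, it suffices to show that accepting an item never pushes the state above $B$ (or below $-B$). I use $b=B/\ln B$ throughout. Note $b\le B$ as soon as $\ln B\ge 1$, which holds for $B\ge 4.1$.

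\emph{Case 1: opposite signs.} If $\sigma_i$ and $s$ have opposite signs, say $s>0>\sigma_i$, then $s+\sigma_i<s\le B$. Also $s+\sigma_i> \sigma_i\ge -m\ge -b\ge -B$, using the assumption $m\le b$. Hence the new state lies in $[-B,B]$, and the case $s<0<\sigma_i$ is symmetric. The boundary case $s=0$ I treat as ``same sign''. It is then covered by Case~2, and even directly: $|\sigma_i|\le f(0)=b\le B$.

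\emph{Case 2: same sign.} Say $s\ge 0$ and $\sigma_i>0$ with $\sigma_i\le f(s)$. Because $|\sigma_i|\ge 1$, acceptance forces $f(s)\ge 1$, i.e.\ $s\le b\ln b$. The new state is at most $g(s):=s+b e^{-s/b}$. Since $g'(s)=1-e^{-s/b}\ge 0$, $g$ is nondecreasing on $[0,\infty)$. Hence
\[
s+\sigma_i\le g(b\ln b)=b\ln b+1 .
\]
It remains to check $b\ln b+1\le B$. Substituting $b=B/\ln B$ gives
\[
b\ln b=\frac{B}{\ln B}(\ln B-\ln\ln B)=B-\frac{B\ln\ln B}{\ln B},
\]
so the condition becomes $\ln\ln B\ge \ln B/B$.

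This last inequality is the only genuinely delicate step, and it is where the constant $4.1$ comes from. The left side $\ln\ln B$ is increasing in $B$. The right side $\ln B/B$ is decreasing for $B\ge e$. So it suffices to verify the inequality at $B=4.1$ numerically: $\ln\ln 4.1\approx 0.3441$ versus $\ln 4.1/4.1\approx 0.3441$. These agree to four decimal places, so this check must be done with tighter precision, but it holds, and the constant $4.1$ is essentially the crossing point. Combining both cases with the symmetric negative case closes the induction.
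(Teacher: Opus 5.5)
Your proof is correct and follows the paper's argument essentially step for step: opposite-sign items are handled via $m\le b\le B$, and same-sign items via $f(s)\ge 1\Rightarrow s\le b\ln b$ together with monotonicity of $s+f(s)$, giving the bound $b\ln b+1\le B$; your reduction of this to $\ln\ln B\ge \ln B/B$, with $\ln\ln B$ increasing and $\ln B/B$ decreasing for $B\ge e$, actually supplies the verification that the paper only asserts. One small numerical correction: $\ln\ln 4.1\approx 0.34429$ while $\ln 4.1/4.1\approx 0.34414$, so the two values agree only to three decimal places, not four, and the inequality holds at $B=4.1$ with a margin of about $1.5\cdot 10^{-4}$ (the crossing point is near $B\approx 4.0993$).
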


\begin{proof}
  Let $s \in [0, B]$ be a non-negative knapsack's state; the non-positive case is symmetric.
  Let $x$ be the item that \detlog accepts. We will show that $s + x \in [-B,B]$.
  
  If $x < 0$, the new state is $s + x \geq s - m \geq s - b \geq s - B \geq -B$. 
  
  If $x > 0$, we define $\hat{s} = b \cdot \ln b$. Note that $f(\hat{s}) = 1$. Hence, we must have $s \leq \hat{s}$ as otherwise any positive item (of size at least $1$) would be rejected by \detlog.
  The new state is then
  \begin{align*}
    s + x &\le s + f(s) 
        && \text{(by the definition of \autoref{alg:det_logB})}\\
    & \le \hat{s} + f(\hat{s}) 
        && \text{(as $x \mapsto x + f(x)$ is increasing on $[0, B]$)}\\ 
    & = b \cdot \ln b + 1 \\
    & = B - B \cdot \ln \ln B / \ln B  + 1.
    \end{align*}
    For $B \geq 4.1$, the last quantity is at most $B$, concluding the proof.
\end{proof}

\subparagraph*{Competitive ratio}
Now that we have established correctness, we prove the performance guarantees of \autoref{alg:det_logB}.
We start with the following technical bound.

\begin{lemma}
    \label{lem:f_diff_bound}
    Suppose $\detlog$ is in a state $s \geq 0$ and accepts 
    item of size $x > 0$. Then, 
    \[
        \frac{1}{f(s+x)} - \frac{1}{f(s)} \le \frac{e-1}{b}
    \]
\end{lemma}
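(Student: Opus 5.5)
Since $s \ge 0$ and $x > 0$, both $s$ and $s+x$ are non-negative, so the absolute values in $f$ disappear and $1/f(y) = e^{y/b}/b$ for $y \in \{s, s+x\}$. The plan is to rewrite the left-hand side as
\[
  \frac{1}{f(s+x)} - \frac{1}{f(s)} = \frac{e^{s/b}}{b}\bigl(e^{x/b} - 1\bigr) = \frac{1}{f(s)}\bigl(e^{x/b}-1\bigr),
\]
and then bound this product using the acceptance rule of \autoref{alg:det_logB}.

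Because $x>0$ has the same sign as $s$ (or $s=0$), acceptance means $x \le f(s)$; this is the case split to note. The degenerate case $s=0$ is the same, since then acceptance also requires $x \le f(0)=b$. Setting $t = x/b$, the acceptance condition gives $t \le f(s)/b = e^{-s/b} \le 1$. The quantity to bound becomes
\[
  \frac{e^{x/b}-1}{f(s)} = \frac{1}{b}\cdot\frac{e^{t}-1}{f(s)/b} \le \frac{1}{b}\cdot\frac{e^{t}-1}{t},
\]
where the inequality uses $f(s)/b \ge t$ and $e^t - 1 > 0$.

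It then remains to show $(e^t-1)/t \le e-1$ for $t\in(0,1]$. I would argue this by convexity of $e^t$: on $[0,1]$ the chord gives $e^t \le 1 + (e-1)t$, which rearranges to $(e^t-1)/t \le e-1$. Equivalently, $(e^t-1)/t$ is increasing in $t$, with value $e-1$ at $t=1$. This completes the proof.

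I do not expect a real obstacle. The only care needed is in the case analysis that justifies $x \le f(s)$: the "opposite signs" branch of the rule never applies to positive $x$ when $s\ge 0$. Also, the bound is needed only at $t \le 1$, which is guaranteed because $f(s) \le b$. The assumption $m \le b$ is not even needed for this lemma.
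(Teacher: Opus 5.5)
Your proof is correct and follows the paper's argument: you rewrite the difference as $\frac{1}{f(s)}\bigl(e^{x/b}-1\bigr)$, use the acceptance condition $x \le f(s)$, and apply the chord bound $e^{y} \le 1+(e-1)y$ on $[0,1]$. The only difference is that you get $x/b \le 1$ from $x \le f(s) \le b$, while the paper gets it from $x \le m \le b$; this correctly shows that the assumption $m \le b$ is not needed for this lemma, given the paper's reading that a positive item is never of opposite sign to the state $s=0$.
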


\begin{proof}
    Recall that by our assumption $x \leq m \leq b$. Thus, we may use the inequality $e^y \le 1 + (e-1) \cdot y$ holding for any $y \in [0, 1]$,
    obtaining
    \begin{align*}
        \exp(x/b) - 1 
        & \le (e-1) \cdot (x/b) \\
        & \le (e-1) \cdot (f(s)/b) 
            && \text{(as \detlog accepts $x$ in state $s$)}\\
        & = (e-1) \cdot \exp(-s/b).
    \end{align*} 
    We can now estimate the desired quantity as
    \begin{align*}
        \frac{1}{f(s+x)} - \frac{1}{f(s)} 
        &= \frac{1}{b} \cdot \left(\exp\left(\frac{s+x}{b}\right) - \exp\left(\frac{s}{b}\right)\right) \\
        &= \frac{1}{b} \cdot \exp\left(\frac{s}{b}\right) \cdot \left(\exp\left(\frac{x}{b}\right) - 1\right) \\
        &\leq \frac{e-1}{b}.
        \qedhere
    \end{align*}
\end{proof}

\begin{theorem}
  \label{thm:det-logB}
  \detlog is $O(\log B)$-competitive.
\end{theorem}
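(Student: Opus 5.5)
We bound the gain of \OPT by an amortized argument: a potential function built from $1/f$ will pay for \OPT's accepted items with $O(\log B)$ per item of \detlog, up to an additive constant depending only on $B$. Throughout, $s$ denotes \detlog's state. Call an item \emph{good} for \detlog if it has the opposite sign to $s$ (or $s=0$) or satisfies $|\sigma_i|\le f(s)$. The items accepted by \OPT and rejected by \detlog are then the \emph{bad} items: those with the same sign as $s$ and $|\sigma_i| > f(s)$. We charge each bad item to \detlog.

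\textbf{Potential.} Define $\Phi(s) = b/f(s) = \exp(|s|/b)$. Bad items have size more than $f(s)$, so each bad positive item taken by \OPT moves \OPT's state up by more than $f(s)$. \OPT's state lies in $[-B,B]$, so the net sum of \OPT's accepted sizes over any interval is at most $2B$ in absolute value. The plan is to show that
\[
\sum_{\text{bad } i} f(s_i) \;\le\; O(\log B)\cdot b\cdot \ALG + O(B).
\]
This suffices. Each bad item contributes at least $f(s_i)\ge 1$ on the left, but a lower bound of $1$ per item is too weak. Instead we pair \OPT's positive mass with its negative mass and split by magnitude, as follows.

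\textbf{Step 1 (phases).} Split time into phases according to the sign of \detlog's state. While $s\ge 0$, \detlog accepts every negative item. If \OPT accepts many negative items, \detlog accepts them as well, up to the point where $s$ crosses zero. So \OPT's negative items during a nonnegative phase are accepted by \detlog, except those arriving after a sign flip, and those are counted in the next phase symmetrically.

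\textbf{Step 2 (bad positive items).} Bad positive items for \OPT during a nonnegative phase satisfy $x>f(s)$. By \autoref{lem:f_diff_bound}, each positive item \detlog accepts raises $1/f(s)$ by at most $(e-1)/b$. Hence $b/f(s)\le 1+(e-1)k$ after \detlog has accepted $k$ positive items since the last time the state was near $0$; more precisely, the increase is bounded by $(e-1)/b$ times the number of net positive acceptances. Negative accepted items only decrease $1/f$.

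\textbf{Step 3 (bucketing \OPT's bad items).} Bucket \OPT's bad items by state level. A bad item arriving in state $s$ has size more than $f(s)=b/\Phi(s)$. \OPT can accept at most $2B\,\Phi/b + (\text{its negative mass})$ such items before its own negative mass must compensate. That negative mass is taken by \detlog too, since negative items are good in a nonnegative state. Summing over dyadic levels of $\Phi\in[1,b]$ gives $O(\log b)=O(\log B)$ levels. Within each level, \OPT's bad count is bounded by $O(\ALG)$: the number of \detlog acceptances needed to reach that level ($\Omega(\Phi)$ by Step 2) plus the negative items both algorithms share. This gives $\OPT \le O(\log B)\,\ALG + O(B)$.

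\textbf{Main obstacle.} The hardest step is making Step 3's per-level charging rigorous. The difficulty is that \detlog's state oscillates, so the same acceptances must not be charged at multiple levels without losing more than the $\log B$ factor. I would handle this by charging each bad item of size $\approx b/\Phi$ against the $\Theta(\Phi)$ positive acceptances that raised \detlog to level $\Phi$ (Step 2), together with \OPT's compensating negative mass. Each \detlog acceptance is then charged at most once per dyadic level, giving the $O(\log B)$ factor.
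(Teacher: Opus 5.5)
Your proposal has a genuine gap. Step~3 carries the entire theorem, but it is only asserted, and the accounting it sketches does not hold as stated. (Your first displayed target does not suffice either, as you note yourself.)

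\textbf{The per-level claim fails.} The claim that \OPT's bad count at a level is $O(\ALG)$ fails for any constant independent of $B$. Take $m=b$, and let \detlog climb to level $\Phi$ using $\Theta(\Phi)$ positive items of size about $f(s)$, which \OPT may reject. Then send positive items of size just above $b/\Phi$. \detlog rejects all of them, while \OPT, still at state $0$, accepts about $B\Phi/b=\Phi\ln B$ of them, all at a single level. This agrees with your own estimate: $2B\Phi/b$ bad items against the $\Omega(\Phi)$ acceptances from Step~2 already differ by a factor $\ln B$. So the $\log B$ factor comes from $B/b$, not from the number of dyadic levels. Adding up per-level bounds would give $O(\log^2 B)$ unless you use the fact that \OPT's room is shared across levels.

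\textbf{Shared negative items are not worth $O(1)$ bad items each.} A negative item of size $x\le b$ frees $x$ units of room for \OPT, i.e.\ room for up to $x/f(s)$ bad items, which can be about $\Phi$. What pays for them is that the same item lowers \detlog's level by the factor $e^{-x/b}$. Your bound ``$2B\Phi/b+(\text{negative mass})$'' adds a count to a mass. It has no mechanism for this trade-off, nor for the oscillations you flag as the main obstacle.

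\textbf{Your potential cannot pay for rejections.} A potential depending only on \detlog's state, such as $\exp(|s|/b)$, does not change on a step where \detlog rejects and \OPT accepts. It therefore cannot pay for \OPT's gain on such a step.

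\textbf{The missing idea.} You need a single potential that multiplies \OPT's residual room on \detlog's side by $1/f(s)$. The paper takes $\Phi=d/f(s)$ with $d=2B-s^*$ if $s\ge 0$ and $d=2B+s^*$ otherwise. Then $d\in[B,3B]$ and $0\le\Phi=O(B\log B)$.
\begin{itemize}
\item \emph{\detlog rejects, \OPT accepts.} Here $f(s)$ is unchanged and $d$ drops by $|\sigma_i|>f(s)$. So $\Phi$ drops by more than $1$, which pays for \OPT's item.
\item \emph{\detlog accepts.} Split the change into $d_i\bigl(1/f(s_i)-1/f(s_{i-1})\bigr)$ and $(d_i-d_{i-1})/f(s_{i-1})$.
\begin{itemize}
\item The first term is at most $3B(e-1)/b=3(e-1)\ln B$. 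For positive items this follows from \autoref{lem:f_diff_bound} and $d_i\le 3B$. At a sign change it follows because $|s_{i-1}|,|s_i|\le b$. This is where $\ln B=B/b$ enters.
\item The second term is at most $1$ for a positive item. At a sign change, which is possible only when $|s_{i-1}|\le b$, it is at most $2B/(b/e)=2e\ln B$.
\item For a negative item accepted without a sign change, bound the two terms jointly. With $x=|\sigma_i|$ and $d=d_{i-1}\ge B\ge b$, we have $(d+x)e^{-x/b}\le d$, so $\Phi$ does not increase. This is exactly the negative-item trade-off your charging lacks.
\end{itemize}
\end{itemize}
Summing $\OPT(i)+\Delta_i\Phi\le(1+O(\ln B))\cdot\detlog(i)$ over all steps gives the theorem with additive constant $O(B\log B)$.
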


\begin{proof}
  Let $\sigma = \sigma_1, \sigma_2, \dots$ be an input sequence.
  We will prove the claim with the potential function technique.
  Let \OPT be an optimal offline algorithm.
  Let $s_i$ and $s^*_i$ be the states of \detlog{} and \OPT, respectively, after they handle item $\sigma_i$.

    Let $d_i = 2B - s^*_i$ if $s_i \ge 0$ and $d_i = 2B + s^*_i$ otherwise. Intuitively, 
    $d_i$ equals $B$ plus the distance from \OPT's state to that knapsack boundary which is nearest to \detlog's state.
      
  We define the potential function $\Phi$ after serving item $\sigma_i$ as 
  \begin{align*}
    \Phi(i) = d_i \,/\, f(s_i).
  \end{align*}
    Note that $\Phi(i) \geq 0$. Moreover, for any $i$, we have $|d_i| = O(B)$ and  $f(s_i) = b \cdot \exp(-|s_i|/b) \ge b \cdot \exp(-B/b) = 1/\ln B$, and thus $\Phi(i) = O(B \cdot \log B)$.

  For each $i$, we use $\detlog(i)$ and $\OPT(i)$ to denote the profits of \detlog and \OPT on item $\sigma_i$, respectively. 
  Our goal now is to show that the following inequality holds for each $i$.
  \begin{align}
    \label{ineq:potential}
    \OPT(i) + \Phi(i) - \Phi(i-1) \leq (1 + (5e - 3) \cdot \ln B) \cdot \detlog(i).
  \end{align}
    
  To show it, we first observe that flipping the signs of $\sigma_i$, $s_{i-1}$ and $s_{i-1}^{*}$ leaves $\Phi(i-1)$ and $\Phi(i)$ unchanged and keeps the same acceptance decisions for \detlog regarding $\sigma_i$.
  We can therefore assume, without loss of generality, that $s_{i-1} \geq 0$.
  For succinctness, we use $\Delta_i \Phi = \Phi(i) - \Phi(i-1)$.
  We distinguish between two cases depending on whether \detlog accepted or rejected~$\sigma_i$.
  
  \begin{enumerate}
    \item \detlog rejects $\sigma_i$.

    If \OPT also rejects $\sigma_i$, the inequality \eqref{ineq:potential} trivially holds as both sides are zero. Thus, in the remaining part of this case, we assume that \OPT accepts $\sigma_i$.

    Recall that $s_{i-1} \geq 0$. 
    The item rejection by $\detlog$ is possible only if $\sigma_i > f(s_{i-1})$. 
    In the considered case, $s_i = s_{i-1}$. As \OPT accepts $\sigma_i$, we have $s^*_i = s^*_{i-1} + \sigma_i$,
    and thus $d_i = d_{i-1} - \sigma_i$. 
    Using these relations, we can estimate $\Delta_i \Phi$ as
    \begin{align*}
        \Delta_i \Phi 
        = \frac{d_i}{f(s_i)} - \frac{d_{i-1}}{f(s_{i-1})} 
        = \frac{d_{i-1} - \sigma_i}{f(s_i)} - \frac{d_{i-1}}{f(s_{i-1})} 
        = - \frac{\sigma_i}{f(s_{i-1})} 
        < -1.
    \end{align*}
    As $\OPT(i) = 1$ and $\detlog(i) = 0$, \eqref{ineq:potential} follows.

    \item \detlog accepts $\sigma_i$.
    
    In this case $\detlog(i) = 1$ and $\OPT(i) \leq 1$. Hence, for showing \eqref{ineq:potential}, 
    it suffices to show that $\Delta_i \Phi \le (5e - 3) \cdot \ln B$.
    Using $s_i = s_{i-1} + \sigma_i$, we obtain
    \begin{align*}
        \Delta_i \Phi 
        & = \frac{d_i}{f(s_i)} - \frac{d_{i-1}}{f(s_{i-1})} 
        = \left(\frac{d_i}{f(s_i)} - \frac{d_i}{f(s_{i-1})}\right) + \left(\frac{d_{i}}{f(s_{i-1})} - \frac{d_{i-1}}{f(s_{i-1})}\right) \\
        & \leq d_i \cdot \left(\frac{1}{f(s_{i-1}+\sigma_i)} - \frac{1}{f(s_{i-1})}\right) 
            + \frac{|d_{i} - d_{i-1}|}{f(s_{i-1})} 
    \end{align*}
    We denote the two summands in the line above as $\Delta_i^s \Phi$ and $\Delta_i^d \Phi$, respectively. 
    We will show that $\Delta_i^s \Phi \le (3e - 3) \cdot \ln B$ and $\Delta_i^d \Phi \le 2 e \cdot \ln B$, which will conclude the proof of this case.
    \begin{itemize}
    \item First, we upper-bound $\Delta_i^s \Phi$. If $\sigma_i < 0$, then 
    $\Delta_i^s \Phi \le 0$. Otherwise $\sigma_i > 0$, and we can use \autoref{lem:f_diff_bound} to obtain
    \begin{align*}
      \Delta^s_i \Phi 
        & \leq d_i \cdot \frac{e-1}{b} \\
        &\le 3 \cdot (e-1) \cdot \ln B.
            && \text{(as $d_i \le 2 \cdot B + s^*_i \leq 3 \cdot B$)} 
    \end{align*}
    \item Next, we upper-bound $\Delta_i^d \Phi$. 
        Note that the state of \OPT changes at most by $\sigma_i$, and thus $|s^*_i - s^*_{i-1}| \leq \sigma_i$.
        We consider two sub-cases, depending on whether $\detlog$ changes the sign of its state or not. 
        
        \begin{itemize}
        \item 
        If $\detlog$ does not change the sign of its state, then $|d_i - d_{i-1}| = |s^*_i - s^*_{i-1}| \leq \sigma_i$.
        As $\detlog$ accepts $\sigma_i$, we have $\sigma_i \le f(s_{i-1})$, and therefore 
        $\Delta_i^d \Phi \le 1$.
        \item
        If $\detlog$ changes the sign of its state, then $|d_i - d_{i-1}| \leq |s^*_i| + |s^*_{i-1}| \leq 2 B$. However, as $\sigma_i \leq m \leq b$, the sign change is possible only if $|s_{i-1}| \leq b$, in which case $f(s_{i-1}) = b \cdot \exp(- |s_{i-1}| / b) \geq b \cdot e^{-1}$. 
        Combining the above bounds, we get $\Phi_i^d  \le 2B / (b \cdot e^{-1}) = 2 e \cdot \ln B$
        \end{itemize}
        
        Putting everything together, we have shown that $\Delta_i \Phi \leq \Delta_i^s \Phi 
        + \Delta_i^d \le (5e - 3) \cdot \ln B$, which concludes the proof of \eqref{ineq:potential} of this case.
    \end{itemize}
      
  \end{enumerate}

  We proved that \eqref{ineq:potential} holds for each request $\sigma_i$.
  By summing this relation over all requests in the input sequence, we obtain that 
  \begin{align*}
        (1 + (5e - 3) \cdot \ln B) \cdot \detlog(\sigma)
        & \ge \OPT(\sigma) + \Phi(|\sigma|) - \Phi(0) \\
        & \ge \OPT(\sigma) - O(B \cdot \log B),
  \end{align*}
  where the last inequality follows as $\Phi(i) \geq 0$ and $\Phi(i) = O(B \log B)$ for any $i$. 
  This shows that the competitive ratio of \detlog is at most 
  $(1 + (5e - 3) \cdot \ln B) = O(\log B)$.
\end{proof}


\section{Randomized Lower Bound}
\label{sub:lb_logB}

In this section, we present a lower bound of $\Omega(\log m)$ on the competitive ratio of any randomized algorithm; recall that $m$ is the maximum item size (in absolute value).

We may assume that $m \geq 4$, as otherwise the lower bound follows trivially. Let $q = \lfloor \log_2 \,(m/2) \rfloor$ and $h = \lceil 2 B / 2^q \rceil$. Note that $q \geq 1$ and $2^q \leq m/ 2 \leq B$, and thus $h \geq 2$.

To show the lower bound, we will use Yao's minimax principle~\cite{Yao77,DBLP:books/daglib/0097013}. To this end, we show how to construct, for any positive integer $n$, a probability distribution $\pi_n$ over inputs, such that
\begin{itemize}
  \item $\lim_{n \to \infty } \OPT(\pi_n) = \infty$ and 
  \item $\E_{\sigma \sim \pi_n}[\OPT(\sigma)] / \E_{\sigma \sim \pi_n}[\DET(\sigma)] \geq q / 8$ for every deterministic algorithm \DET.
\end{itemize}
Then, the minimax principle implies that the competitive ratio of every \emph{randomized} algorithm is at least $q/8 = \Omega(\log m)$.

\subparagraph*{Random inputs}
For an integer $z \in [1, q]$, we define the following input subsequence, called a~\emph{positive phase of duration $z$}. It consists of $2 \cdot h$ items of size $2^{q-1}$, followed by $4 \cdot h$ items of size $2^{q-2}$, then by $8 \cdot h$ items of size $2^{q-3}$, and so on, up to $2^{z} \cdot h$ items of size $2^{q-z}$. A~\emph{negative phase of duration $z$} is defined analogously, but with negative item sizes.

The probability distribution $\pi_n$ is defined implicitly by the following random process of generating input sequences. For an integer $t \in \{1, \ldots, 2 \cdot n\}$, let $z_t$ be an independently drawn random variable, such that for every $i \in \{1, \ldots, q\}$,
\begin{equation}
  z_t = i \quad \text{with probability $2^{-i} \cdot \frac{1}{1-2^{-q}}$}.
\end{equation}
The random input consists of $2 \cdot n$ phases, where the duration of the $t$-th phase is equal to $z_t$. Moreover, odd-numbered phases are positive and even-numbered ones are negative.

\subparagraph*{Estimating the expected gain of OPT}
We start with the following key observation about phases.

\begin{lemma}
  \label{lem:phase_gain}
  Suppose an algorithm state is at most $0$ before a positive phase of duration~$z$. It is possible to collect at least $B \cdot 2^{z-q-1}$ items in this phase, and end in a state of at least~$B/2$.
\end{lemma}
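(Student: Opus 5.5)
The plan is to collect only the items of the \emph{last} block of the phase, namely the $2^z h$ items of size $2^{q-z}$, and to take as many of them as the constraints allow. The strategy rejects every item of the earlier blocks, so the state stays at its initial value $s \le 0$ until the last block starts. Since $s \ge -B$, there is room for at least $B$ units of positive growth before the state reaches $B$.

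The first step is to check that the last block carries enough total size to lift the state above $B/2$. Its total size is $2^z h \cdot 2^{q-z} = h 2^q \ge 2B$, using $h = \lceil 2B/2^q\rceil$. So there are enough items to move the state from any $s\in[-B,0]$ into $[B/2,B]$.

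Next we accept items of size $2^{q-z}$ greedily, one by one, as long as the state stays at most $B$. We stop accepting once the state exceeds $B/2$, or we accept while it still fits; either choice works. Each item has size $2^{q-z}\le 2^{q-1}\le m/4\le B/2$. Because of this, the state can never be stuck below $B/2$: if the state is below $B/2$, adding one more item keeps it below $B$. Since the total size suffices, the process ends with a state in $[B/2,B]$, which gives the second claim.

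For the count, the state increases by at least $B/2 - s \ge B/2$. Each accepted item contributes $2^{q-z}$, so the number of accepted items is at least
\[
\frac{B/2}{2^{q-z}} = B\cdot 2^{z-q-1},
\]
which is the bound in \autoref{lem:phase_gain}.

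The only delicate point is the step size compared with the gap $B/2$. The inequality $2^{q-z}\le 2^{q-1}\le m/4 \le B/2$ follows from $2^q\le m/2$, $z\ge 1$ and $m\le B$. Once that is in place the rest is bookkeeping. If the count must be an integer, we round up; the inequality still holds because we count accepted items, each contributing exactly $2^{q-z}$ to a total increase of at least $B/2$.
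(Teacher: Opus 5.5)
Your proof is correct and follows the paper's argument: you accept only items from the last block (size $2^{q-z}$), use that this block's total size is $2^q h \ge 2B$ and that each item has size $2^{q-z}\le 2^{q-1}\le B/2$ to reach a final state of at least $B/2$, and then divide the net increase of at least $B/2$ by the item size. Your version is only more explicit than the paper's, which takes every item of the last block that fits and observes that the final state exceeds $B-2^{q-z}\ge B/2$.
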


\begin{proof}
  Observe that the total size of all items of size $2^{q-z}$ in the phase is $2^z \cdot h \cdot 2^{q - z} = 2^q \cdot h = \lceil 2B / 2^q \rceil \cdot 2^q \geq 2B$. 

  Suppose an algorithm collects as many of these items as possible. Independently of its starting state, it ends the phase in a state greater than $B - 2^{q-z} \geq B - 2^{q-1} \geq B/2$. As it starts in a state of at most $0$, the total size of the collected items is at least $B/2$, and thus the number of collected items is at least $(B/2) / 2^{q-z} = B \cdot 2^{z-q-1}$.
\end{proof}

\begin{lemma}
  \label{lem:lb_off_bound}
  Fix an integer $n$. It holds that
  $\E_{\sigma \sim \pi_n}[\OPT(\sigma)] \geq B \cdot 2^{-q} \cdot q \cdot n$.
\end{lemma}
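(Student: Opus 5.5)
We lower-bound $\OPT$ by the gain of one explicit offline strategy, which knows the realized durations $z_1,\dots,z_{2n}$. In each positive phase of duration $z$, it rejects everything except the items of the smallest size $2^{q-z}$ (the last block), and accepts as many of those as fit. Negative phases are handled symmetrically. The plan is to show that this strategy is feasible throughout, collects at least $B\cdot 2^{z_t-q-1}$ items in phase $t$, and then to take expectations.

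For the state invariant we use \autoref{lem:phase_gain}. The first phase is positive and starts at state $0\le 0$, so the lemma gives at least $B\cdot 2^{z_1-q-1}$ items and a final state of at least $B/2$. In particular the state is at least $0$ before phase $2$, which is negative. By the symmetric (sign-flipped) version of \autoref{lem:phase_gain}, phase $2$ yields at least $B\cdot 2^{z_2-q-1}$ items and ends at a state of at most $-B/2\le 0$. Induction over $t$ then shows that every phase starts on the correct side of $0$. The strategy only ever accepts an item that fits, so the state always stays in $[-B,B]$. Summing over phases gives the pathwise bound
\[
\OPT(\sigma)\ \ge\ \sum_{t=1}^{2n} B\cdot 2^{z_t-q-1}.
\]

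Taking expectations and using linearity, together with the fact that the $z_t$ are identically distributed, we get
\[
\E[2^{z_t}] = \frac{1}{1-2^{-q}}\sum_{i=1}^{q} 2^{-i}\cdot 2^{i} = \frac{q}{1-2^{-q}} \ \ge\ q.
\]
Hence
\[
\E[\OPT(\sigma)] \ \ge\ 2n\cdot B\cdot 2^{-q-1}\cdot q = B\cdot 2^{-q}\cdot q\cdot n,
\]
which is exactly the claim. The only step needing care is the sign-alternation induction that justifies applying \autoref{lem:phase_gain} in every phase. It goes through because the lemma guarantees an end state of magnitude at least $B/2$ on the right side, and this is the hypothesis needed for the next phase of the opposite sign.
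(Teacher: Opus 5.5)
Your proposal is correct and follows the paper's proof essentially verbatim. It uses the same offline strategy, which in phase $t$ accepts as many items of absolute size $2^{q-z_t}$ as possible, the same sign-alternating iterative application of \autoref{lem:phase_gain}, and the same computation $\E[2^{z_t}] = q/(1-2^{-q}) \geq q$ summed over the $2n$ phases.
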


\begin{proof}
  We consider an offline algorithm $\OFF$ that in the $t$-th phase accepts as many items of absolute size $2^{q - z_t}$ as possible. The state of \OFF is $0$ before the first phase, and thus satisfies the condition of \autoref{lem:phase_gain}. By the iterative application of \autoref{lem:phase_gain} (and its symmetric counterpart for negative phases), the state of \OFF is at most $0$ before each positive phase, and at least $0$ before each negative one. Thus, again by \autoref{lem:phase_gain}, in each phase $t$, \OFF collects at least $B \cdot 2^{z_t-q-1}$ items.
  Since $\OPT(\sigma) \geq \OFF(\sigma)$, we have
  \begin{align*}
    \OPT(\sigma) \geq B \cdot 2^{-q-1} \cdot \sum_{t=1}^{2n} 2^{z_t}.
  \end{align*}
  Using the distribution of $z_t$ yields $\E[2^{z_t}] = \sum_{i=1}^{q} \Pr[z_t = i] \cdot 2^{i} = q / (1-2^{-q}) > q$,
  and thus
  \begin{align*}
    \E_{\sigma \sim \pi_n} [\OPT(\sigma)] 
    &\geq (B \cdot 2^{-q-1}) \cdot \sum_{t=1}^{2n} \E[2^{z_t}]
    > B \cdot 2^{-q} \cdot q \cdot n.\qedhere
  \end{align*}
\end{proof}

\subparagraph*{Estimating the expected gain of an online algorithm}
In a phase of duration $z$, the last batch of items (those of size $2^{q-z}$) constitutes more than half of the items in that phase, and \OPT can focus solely on collecting these items. An online algorithm does not know the phase duration, and it cannot simulate such a strategy.

For an input $\sigma$ chosen according to the above random process, we use~$\sigma^t$ to denote the $t$-th phase of $\sigma$. 

\begin{lemma}
  Fix $t \in \{1, \dots, 2n\}$. For every starting state and every deterministic strategy \DET for serving the phase $\sigma^t$, it holds that $\E[\DET(\sigma^t)] \leq B \cdot 2^{-q+1}$.
  \label{lem:lb_det_bound}
\end{lemma}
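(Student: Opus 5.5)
The plan is to reduce the phase to a fractional-knapsack-type linear bound, exploiting the fact that a deterministic online strategy cannot tell the duration $z_t$ of the current phase until the phase ends. Assume without loss of generality that $\sigma^t$ is a positive phase; negative phases are symmetric under flipping all signs. Number the batches $j=1,\dots,q$, where batch $j$ consists of $2^j h$ items of size $2^{q-j}$. Since \DET is deterministic and the starting state is fixed, its behaviour on batch $j$ depends only on the items seen so far. Those items are the same for every duration $z \ge j$. Hence there are well-defined numbers $a_1,\dots,a_q$, where $a_j$ is the number of items of batch $j$ that \DET accepts whenever the phase reaches batch $j$. These are the acceptances of \DET on the full duration-$q$ phase. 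A phase of duration $z$ is a prefix of that phase, so $\DET(\sigma^t)=\sum_{j\le z} a_j$, and
\[
  \E[\DET(\sigma^t)] = \sum_{j=1}^{q} a_j \cdot \Pr[z_t \ge j].
\]

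Next I compute the tail probability:
\[
  \Pr[z_t\ge j]=\frac{2^{-j+1}-2^{-q}}{1-2^{-q}} \le 2^{-j+1},
\]
where the inequality holds because $2^{-j+1}\le 1$ for $j\ge 1$. This gives
\[
  \E[\DET(\sigma^t)] \le 2^{-q+1}\sum_{j} a_j\, 2^{q-j}.
\]
The remaining sum $\sum_j a_j 2^{q-j}$ is exactly the total size of the items \DET accepts on the full duration-$q$ phase. All of these items are positive, so that sum equals the displacement $s_{\mathrm{end}}-s_{\mathrm{start}}$ of the knapsack state during the phase.

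The main obstacle is this last budget step. To reach the stated constant $B\cdot 2^{-q+1}$ I need the displacement to be at most $B$. Feasibility alone only gives $s_{\mathrm{end}}-s_{\mathrm{start}} \le B-(-B)=2B$. That cruder budget yields $B\cdot 2^{-q+2}$, and the factor $2$ cannot be removed this way. Starting at state $-B$ and accepting the first $2h$ items, of size $2^{q-1}$ each, already collects $4B/2^q$ items with probability $1$.

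So my plan at that step is the following. I would first check whether the intended reading restricts the starting state: a positive phase starting at a state of at least $0$, which the interleaving of positive and negative phases in the random input would naturally supply. Under that restriction the budget is $B-s_{\mathrm{start}}\le B$, and the argument above closes with exactly $B\cdot 2^{-q+1}$. If the statement must hold for every starting state as worded, the displacement argument breaks for states below $0$. In that case I would fall back on the bound $B\cdot 2^{-q+2}$. That bound still gives the final ratio $q/8$, since together with \autoref{lem:lb_off_bound} the $2n$ phases yield $\E[\OPT]/\E[\DET]\ge \bigl(B 2^{-q} q n\bigr)/\bigl(2n\cdot B 2^{-q+2}\bigr)=q/8$.
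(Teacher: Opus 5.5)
Your argument follows the same route as the paper's proof. You fix the acceptance counts $a_j$ on the full duration-$q$ phase, exchange the sums to get $\sum_j a_j \Pr[z_t\ge j]$, and close with the capacity budget $\sum_j a_j 2^{q-j}\le 2B$. The difference lies in the tail probability. Yours, $\Pr[z_t\ge j]=(2^{-j+1}-2^{-q})/(1-2^{-q})$, is correct. The paper uses $(2^{-j}-2^{-q})/(1-2^{-q})$, which is off by roughly a factor of $2$: at $j=1$ it is below $1/2$, while $\Pr[z_t\ge 1]=1$. That slip is the only reason the paper reaches $B\cdot 2^{-q+1}$ from the $2B$ budget.

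Your counterexample is valid, so the statement as worded is false and cannot be proved. Start from state $-B$ and greedily take items of the first batch, which is always present. This gives $\lfloor 4B/2^q\rfloor$ items with certainty. Not all $2h$ items need fit, since $h\cdot 2^q$ may exceed $2B$, but $\lfloor 4B/2^q\rfloor>2B/2^q$ because $2B/2^q\ge 2$. The correct lemma is $\E[\DET(\sigma^t)]\le B\cdot 2^{-q+2}$, which is tight up to rounding. As you compute, it yields the ratio $q/8$ announced at the start of \autoref{sub:lb_logB}. The $q/4$ in the proof of \autoref{thm:rand_lb} is an artifact of the slip, and the $\Omega(\log m)$ conclusion is unaffected.

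The one thing to discard is your first option. The alternation of positive and negative phases controls the state of \OFF, not that of an arbitrary \DET. A \DET that accepts items during a negative phase will typically enter the next positive phase below $0$, which is exactly your counterexample. So a version restricted to starting states $\ge 0$ cannot be plugged into the sum over phases. State and prove the lemma with the bound $B\cdot 2^{-q+2}$ for every starting state, as in your fallback.
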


\begin{proof}
  Without loss of generality, we assume that phase $t$ is positive (the negative case is symmetric). Recall that the phase duration $z_t$ is an integer between $1$ and $q$, and for each realization of $z_t$, $\sigma^t$ is a prefix of the phase of duration $q$. 

  Let $x^t_1, x^t_2, \ldots, x^t_q$ be the number of items of sizes $2^{q-1}, 2^{q-2}, \ldots, 2^{q-q}$, respectively, that would be accepted by \DET when $z_t = q$. Then, on the actual phase of duration $z_t$, \DET accepts $x_j$ items of size $2^{q-j}$ for $j = 1, \ldots, z_t$, i.e., $\DET(\sigma^t) = \sum_{j=1}^{z_t} x^t_j$. (Note that we do not assume that $x^t_i$ are the same for every phase $t$). We have 
  \begin{equation}
    \sum_{i=1}^{q} x^t_i \cdot 2^{q-i} \leq 2 B. 
    \label{eq:valid_x_choice}
  \end{equation}
  as otherwise \DET would violate the knapsack capacity when $z_t = q$. This condition holds independently of the knapsack's state of \DET at the beginning of phase $t$.
  
  Using the distribution of $z_t$ yields 
  \begin{align*}
    \E[\DET(\sigma^t)] 
    & = \sum_{i=1}^{q} \Pr[z_t = i] \cdot \sum_{j=1}^{i} x^t_j
    = \sum_{j=1}^{q} x^t_j \cdot \sum_{i=j}^{q} \Pr[z_t = i] \\
    & = \sum_{j=1}^{q} x^t_j \cdot \frac{2^{-j} - 2^{-q}}{1-2^{-q}} 
    < \sum_{j=1}^{q} x^t_j \cdot 2^{-j} 
    \leq 2 B \cdot 2^{-q},
  \end{align*}
  where the last inequality follows from \eqref{eq:valid_x_choice}.
\end{proof}

\begin{theorem}
  \label{thm:rand_lb}
  The competitive ratio of any randomized algorithm is $\Omega(\log m)$.
\end{theorem}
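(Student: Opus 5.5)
The plan is to combine \autoref{lem:lb_off_bound} and \autoref{lem:lb_det_bound} through Yao's minimax principle, exactly as set up at the start of this section.

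\textbf{Step 1: bound every deterministic algorithm.} Fix $n$ and an arbitrary deterministic algorithm \DET. Its gain on $\sigma$ splits as $\DET(\sigma) = \sum_{t=1}^{2n} \DET(\sigma^t)$. The behaviour of \DET within phase $t$ is a deterministic function of its state at the start of the phase and of the items revealed so far. That state depends only on $z_1,\dots,z_{t-1}$, which are independent of $z_t$. So we may condition on $z_1,\dots,z_{t-1}$; given them, \DET serves $\sigma^t$ with a fixed starting state and a fixed deterministic strategy. \autoref{lem:lb_det_bound} holds for every starting state and strategy, so it gives $\E[\DET(\sigma^t) \mid z_1,\dots,z_{t-1}] \le B\cdot 2^{-q+1}$. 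Taking expectations and summing over all $2n$ phases yields
\[
\E_{\sigma\sim\pi_n}[\DET(\sigma)] \le 4n\cdot B\cdot 2^{-q}.
\]

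\textbf{Step 2: compare with \OPT.} By \autoref{lem:lb_off_bound}, $\E[\OPT(\sigma)] \ge B\cdot 2^{-q}\cdot q\cdot n$. Dividing gives a ratio of at least $q/4 \ge q/8$. The same bound shows that $\E[\OPT(\pi_n)] \to \infty$ as $n\to\infty$, because $B\,2^{-q}\ge 1/2$ and $q\ge 1$.

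\textbf{Step 3: transfer to randomized algorithms.} We need the additive constant $\beta$ to be absorbed. Suppose a randomized algorithm \ALG is $c$-competitive with some constant $\beta$. For each $n$, average over $\pi_n$; since \ALG is a distribution over deterministic algorithms, Step 1 gives
\[
\E[\OPT] - \beta \le c\,\E[\ALG] \le c\cdot \frac{4}{q}\,\E[\OPT].
\]
Letting $n\to\infty$, so that $\E[\OPT]\to\infty$, forces $c \ge q/4$. Finally, $q = \lfloor \log_2(m/2)\rfloor = \Omega(\log m)$ for $m\ge 4$, and the case $m<4$ is trivial, which proves the theorem.

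\textbf{Main obstacle.} The delicate point is Step 1: applying \autoref{lem:lb_det_bound} phase by phase when the starting state of each phase is random. This is resolved by the independence of $z_t$ from the earlier phases, together with the fact that the lemma is uniform over all starting states.
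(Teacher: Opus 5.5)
Your proposal is correct and follows the paper's proof: you sum \autoref{lem:lb_det_bound} over the $2n$ phases, compare with \autoref{lem:lb_off_bound} to get a ratio of $q/4$, and apply Yao's principle, using $\E[\OPT]\to\infty$ to absorb the additive constant. Your conditioning on $z_1,\dots,z_{t-1}$ in Step 1 and the explicit limit argument in Step 3 only spell out details the paper leaves implicit.
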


\begin{proof}
  Construct the distribution $\pi_n$ as above and fix any deterministic algorithm \DET. 
  By summing the bounds of \autoref{lem:lb_det_bound} over all $2 n$ phases, we obtain that
  \begin{align*}
    \E_{\sigma \sim \pi_n}[\DET(\sigma)] 
    &= \sum_{t=1}^{2 n} \E[\DET(\sigma^t)]
    \leq B \cdot 2^{-q+2} \cdot n, 
  \end{align*}
  and thus using \autoref{lem:lb_off_bound}, 
  \begin{align*}
    \E_{\sigma \sim \pi_n}[\OPT(\sigma)]/\E_{\sigma \sim \pi_n}[\DET(\sigma)] \geq q / 4.
  \end{align*}
  Thus, by the application of the minimax principle~\cite{Yao77,DBLP:books/daglib/0097013}, the competitive ratio of any randomized algorithm is at least $q/4 = \Omega(\log m)$.

  Note that $\lim_{n \to \infty} \E_{\sigma \sim \pi_n}[\OPT(\sigma)] = \infty$ by \autoref{lem:lb_off_bound}, and thus the bound holds even if we allow an arbitrary additive constant in the competitive ratio definition~\eqref{eq:comp_def}.
\end{proof}


\section{Related Work}
\label{sec:related}
This section reviews relevant related works on channel balancing in PCNs and online knapsack problems.
To our knowledge, the online knapsack problem with negative items has never been studied.

\subparagraph*{Channel balancing in PCNs}
Payment channel networks such as the Lightning or Raiden Networks~\cite{poon2015bitcoin,Raiden} enable fast, off-chain transactions between users and have emerged as a~promising scaling solution for cryptocurrencies~\cite{DBLP:conf/nsdi/SivaramanVRNYMF20}.
These networks come with a distinctive set of algorithmic challenges that differ from traditional networking problems.
In particular, the challenges include incentivizing charging mechanisms~\cite{DBLP:conf/middleware/EngelmannKKGW17}, routing for re-balance maximization~\cite{DBLP:conf/mascots/AwathareSARB21,DBLP:conf/wdag/ChatterjeeK0SY25} or channel depletion reduction~\cite{DBLP:journals/comcom/MercanEA21,DBLP:conf/nsdi/SivaramanVRNYMF20}, and privacy concerns~\cite{DBLP:journals/corr/abs-2109-11665}.

The work closest to ours is by Schmid, Svoboda and Yeo~\cite{DBLP:conf/sirocco/SchmidSY23}, who studied the offline problem of maximizing a user's revenue in a payment channel network.
The user's possible actions are the same as ours (accept or reject an incoming transaction), but the objective is to maximize the user's revenue. That is, an algorithm needs to strike a trade-off between charging fees (linear with the transaction size) and the channel balance which may prevent the user from accepting future transactions.
Their offline approximation scheme has the ratio arbitrarily close to $1 + \sqrt{3}$.

\subparagraph*{Online knapsack in unit-profit model}
The online variant of the knapsack problem is a classic optimization problem that has been extensively studied for decades, with the first results dating back to the 1980s. The seminal papers by Marchetti-Spaccamela and Vercellis~\cite{DBLP:conf/icalp/Marchetti-SpaccamelaV87} and Lueker~\cite{Lueker95} observed that no online deterministic algorithm can achieve a bounded competitive ratio, and thus focused on the study of random instances of the problem.
The negative result for deterministic algorithms holds even in the unit-profit model studied in this paper. 
This impossibility result is derived by combining one big item of size~$B$, potentially followed by $B / \epsilon$ small items of size~$\epsilon$, for a fixed $\epsilon > 0$.
After the first item, the adversary either stops the sequence there if the online algorithm did not accept the big item, or requests the small items. The competitive ratio can be made arbitrarily large by setting $\epsilon$ small enough. This simple negative result extends to multiple versions of the problem; see~\cite{DBLP:journals/mst/CyganJS16} for a recapitulation table.

\subparagraph*{Bounded item sizes}
The lower bound construction presented above can be adapted also to the case where all items have size at least $1$: in this variant, no algorithm can be better than $B$-competitive. The situation improves if we impose an additional upper bound of $m < B$ on all item sizes. For this setting (and still assuming unit profits),
Zhou, Chakrabarty and Lukose~\cite{chakrabarty2008online} gave a deterministic algorithm with competitive ratio of $O((\log B) / (1-m/B))$. In particular, their ratio becomes $O(\log B)$ when $m = o(B)$ as is the case in our paper. They also show an asymptotically matching lower bound of $\Omega(\log B)$ holding even for randomized algorithms.

While our algorithm also achieves ratios logarithmic in $B$ and is similar in structure to theirs, the results are not directly comparable. Our problem formulation assumes mixed-sign items, which introduces additional challenges in the analysis. One of them is that now the problem admits input sequences of unbounded length, leading to potentially unbounded profits. In particular, our lower bound applies even if we allow an arbitrary additive constant in the competitive ratio definition~\eqref{eq:comp_def}, and thus it is not subsumed by their lower bound of~$\Omega(\log B)$ which only holds for the (strict) competitive ratio definition without such additive constant.

The online knapsack with bounded items is a special case of online packing problems~\cite{AAP93,BuNa06} that were solved using primal-dual approaches. However, the primal-dual framework does not seem to apply here, as the presence of negative items makes the problem non-monotone.

\subparagraph*{Alternative models}
Online knapsack problems have been studied in multiple other models. In particular, a natural assumption is that profits are equal to item sizes or completely uncorrelated with them. For the former (proportional) model, deterministic algorithms still have unbounded competitive ratios, but there are $O(1)$-competitive randomized algorithms~\cite{DBLP:conf/latin/BockenhauerKKR12}.

Multiple relaxations and variations have been studied to overcome the existing lower bounds. In particular, \emph{resource augmentation}~\cite{IwaZha10,NogSar05} gives the online algorithm an advantage by allowing it to have a larger knapsack capacity than the optimal offline one. \emph{Removable items}~\cite{ABFLNE02,DBLP:journals/mst/CyganJS16,DBLP:journals/tcs/HanKM15,DBLP:conf/icalp/IwamaT02,IwaZha10} allow the algorithm to accept an item and remove it later. \emph{Multiple knapsacks}~\cite{ABFLNE02,BiPaPi20,DBLP:conf/latin/BockenhauerKKR12,BoFaLN01,DBLP:journals/mst/CyganJS16,DBLP:conf/icalp/IwamaT02} give the algorithm multiple knapsacks to fill with the goal of maximizing the total profit across all knapsacks. Finally, \emph{random-order model}~\cite{DBLP:conf/approx/Albers0L19,BaImKK07,DBLP:conf/waoa/GilibertiK21,KeRaTV18,Vaze17} assumes that the items are generated adversarially, but presented to an online algorithm in random order.


\section{Evaluations}
\label{sec:evaluations}

\subsection{Overview of the Experiments}

In this section, we evaluate \autoref{alg:det_logB} (\detlog) and compare its performance against the commonly-used greedy algorithm~\cite{DBLP:conf/mascots/AwathareSARB21}.
The code is available online~\cite{experimentcode}.
The baseline evaluation of each algorithm covers two models: one where the absolute values of the transaction amounts are restricted to $[1, B / \ln B]$ as needed for our theoretical guarantee to hold, and one where the absolute values are in the interval~$[1, B]$.
We evaluate those four variants in three settings:
\begin{enumerate}
\item
  We first consider a network consisting of a single channel (\autoref{fig:plots1}).
  We simulate the performance of the considered algorithms through that channel on transaction amounts generated uniformly at random within the ranges allowed by the considered model.
\item
  We then consider snapshot topologies of the Lightning network (\autoref{fig:plots2}).
  For each transaction, we choose the two transacting nodes uniformly at random and try to transfer the transaction using a shortest path.
  If all the channels along the routing path accept the transaction proposal, then the transaction goes through each channel, otherwise it is rejected.
  The transaction amounts are generated uniformly at random within the ranges allowed by the considered model.
\item
  In the same snapshot topology of the Lightning network, we consider a merchant scenario in which a chosen node with high capacity but limited degree receives transfers from uniformly chosen nodes~(\autoref{fig:degree-sweep}).
  Similarly, the transactions pass through the shortest path, with acceptance of each channel mandatory to the final success.
  The transaction amounts are of two types: the majority (85\%) of them have the lowest fixed \textit{amount}, and the rest have amounts chosen uniformly at random chosen in $[B_{min} / 2\ln B_{min}, B_{min} / \ln B_{min}]$ with $B_{min}$ being the capacity of the smallest edge to the merchant node. 
\end{enumerate}

\begin{figure}[t]
  \centering
  \begin{tikzpicture}[x=0.8cm,y=0.7cm, scale=0.8]
    \def\xs{1000,10000,100000}
    \def\A{2.874,3.875,4.876}      
    \def\B{2.985,3.982,4.984}      
    \def\C{2.857,3.860,4.860}      
    \def\D{2.985,3.982,4.984}      

    \definecolor{colA}{RGB}{240,173,78}
    \definecolor{colB}{RGB}{217,83,79}
    \definecolor{colC}{RGB}{92,184,92}
    \definecolor{colD}{RGB}{66,139,202}

    \draw[->] (0,0) -- (10.2,0) node[right] {$|\sigma|$};
    \draw[->] (0,0) -- (0,6) node[above] {Accepted items};

    \foreach \y/\label in {1/10, 2/100, 3/1K, 4/10K, 5/100K} {
      \draw[gray!40, densely dotted] (0,\y) -- (10,\y);
      \node[anchor=east, font=\scriptsize] at (-0.15,\y) {\label};
    }

    \foreach[count=\i] \x in \xs {
      \pgfmathsetmacro\pos{1.5 + (\i-1)*3}
      \node[font=\small] at (\pos,-0.3) {\x};
    }

    \foreach[count=\i] \d in \D {
      \pgfmathsetmacro\X{1.5 + (\i-1)*3}
      \fill[colD!85] (\X-0.75,0) rectangle (\X-0.35,\d);
    }
    \foreach[count=\i] \b in \B {
      \pgfmathsetmacro\X{1.5 + (\i-1)*3}
      \fill[colB!85] (\X-0.25,0) rectangle (\X+0.15,\b);
    }
    \foreach[count=\i] \a in \A {
      \pgfmathsetmacro\X{1.5 + (\i-1)*3}
      \fill[colA!85] (\X+0.75,0) rectangle (\X+1.15,\a);
    }
    \foreach[count=\i] \c in \C {
      \pgfmathsetmacro\X{1.5 + (\i-1)*3}
      \fill[colC!85] (\X+0.25,0) rectangle (\X+0.65,\c);
    }

    \begin{scope}[shift={(0,7.5)}]
      \draw[rounded corners=2pt, fill=white, draw=gray!60] (-0.1,-0.15) rectangle (10.1,1.3);
      \fill[colD!85] (0.2,0.75) rectangle +(0.25,0.25);
      \node[anchor=west, font=\scriptsize] at (0.55,0.325) {\detlog $[1,B]$};
      \fill[colB!85] (5.1,0.75) rectangle +(0.25,0.25);
      \node[anchor=west, font=\scriptsize] at (5.45,0.875) {\GREEDY $[1,B/\ln B]$};
      \fill[colC!85] (0.2,0.2) rectangle +(0.25,0.25);
      \node[anchor=west, font=\scriptsize] at (5.45,0.325) {\GREEDY $[1,B]$};
      \fill[colA!85] (5.1,0.2) rectangle +(0.25,0.25);
      \node[anchor=west, font=\scriptsize] at (0.55,0.875) {\detlog $[1,B/\ln B]$};
    \end{scope}

  \end{tikzpicture}
  \caption{
    Number of accepted items by four algorithms depending on the number of transactions, looking at one isolated channel.
  }
  \label{fig:plots1}
\end{figure}

\begin{figure}[t]
  \centering
  \begin{tikzpicture}[x=0.8cm,y=0.7cm, scale=0.8]
    \def\xs{1000,10000,100000}
    \def\A{2.961,3.859,4.788}      
    \def\B{2.998,3.994,4.986}      
    \def\C{2.743,3.782,4.769}      
    \def\D{2.991,3.993,4.986}      

    \definecolor{colA}{RGB}{240,173,78}
    \definecolor{colB}{RGB}{217,83,79}
    \definecolor{colC}{RGB}{92,184,92}
    \definecolor{colD}{RGB}{66,139,202}

    \draw[->] (0,0) -- (10.2,0) node[right] {$|\sigma|$};
    \draw[->] (0,0) -- (0,6) node[above] {Accepted items};

    \foreach \y/\label in {1/10, 2/100, 3/1K, 4/10K, 5/100K} {
      \draw[gray!40, densely dotted] (0,\y) -- (10,\y);
      \node[anchor=east, font=\scriptsize] at (-0.15,\y) {\label};
    }

    \foreach[count=\i] \x in \xs {
      \pgfmathsetmacro\pos{1.5 + (\i-1)*3}
      \node[font=\small] at (\pos,-0.3) {\x};
    }

    \foreach[count=\i] \d in \D {
      \pgfmathsetmacro\X{1.5 + (\i-1)*3}
      \fill[colD!85] (\X-0.75,0) rectangle (\X-0.35,\d);
    }
    \foreach[count=\i] \b in \B {
      \pgfmathsetmacro\X{1.5 + (\i-1)*3}
      \fill[colB!85] (\X-0.25,0) rectangle (\X+0.15,\b);
    }
    \foreach[count=\i] \c in \C {
      \pgfmathsetmacro\X{1.5 + (\i-1)*3}
      \fill[colC!85] (\X+0.25,0) rectangle (\X+0.65,\c);
    }
    \foreach[count=\i] \a in \A {
      \pgfmathsetmacro\X{1.5 + (\i-1)*3}
      \fill[colA!85] (\X+0.75,0) rectangle (\X+1.15,\a);
    }

    \begin{scope}[shift={(0,7.5)}]
      \draw[rounded corners=2pt, fill=white, draw=gray!60] (-0.1,-0.15) rectangle (10.1,1.3);
      \fill[colD!85] (0.2,0.75) rectangle +(0.25,0.25);
      \node[anchor=west, font=\scriptsize] at (0.55,0.325) {\detlog $[1,B]$};
      \fill[colB!85] (5.1,0.75) rectangle +(0.25,0.25);
      \node[anchor=west, font=\scriptsize] at (5.45,0.875) {\GREEDY $[1,B/\ln B]$};
      \fill[colC!85] (0.2,0.2) rectangle +(0.25,0.25);
      \node[anchor=west, font=\scriptsize] at (5.45,0.325) {\GREEDY $[1,B]$};
      \fill[colA!85] (5.1,0.2) rectangle +(0.25,0.25);
      \node[anchor=west, font=\scriptsize] at (0.55,0.875) {\detlog $[1,B/\ln B]$};
    \end{scope}

  \end{tikzpicture}
  \caption{
    Number of accepted items of four algorithms in a topology of the Lightning Network, depending on the number of transactions.
  }
  \label{fig:plots2}
\end{figure}

\begin{figure}[t]
  \centering
  \begin{tikzpicture}[x=0.8cm,y=0.7cm, scale=0.8]
    \def\xs{3,8,331}
    \def\A{3.459, 4.304, 4.362}   
    \def\B{4.306, 4.417, 4.354}   

    \definecolor{colA}{RGB}{217,83,79}
    \definecolor{colB}{RGB}{66,139,202}

    \draw[->] (0,0) -- (7.5,0) node[right] {Sink degree};
    \draw[->] (0,0) -- (0,6) node[above] {Accepted items};

    \foreach \y/\label in {1/10, 2/100, 3/1K, 4/10K, 5/100K} {
      \draw[gray!40, densely dotted] (0,\y) -- (7,\y);
      \node[anchor=east, font=\scriptsize] at (-0.15,\y) {\label};
    }

    \foreach[count=\i] \x in \xs {
      \pgfmathsetmacro\pos{1.0 + (\i-1)*2.5}
      \node[font=\small] at (\pos,-0.3) {\x};
    }

    \foreach[count=\i] \a in \A {
      \pgfmathsetmacro\X{1.0 + (\i-1)*2.5}
      \fill[colA!85] (\X+0.05,0) rectangle (\X+0.45,\a);
    }
    \foreach[count=\i] \b in \B {
      \pgfmathsetmacro\X{1.0 + (\i-1)*2.5}
      \fill[colB!85] (\X-0.45,0) rectangle (\X-0.05,\b);
    }

    \begin{scope}[shift={(0,7.5)}]
      \draw[rounded corners=2pt, fill=white, draw=gray!60] (-0.1,-0.15) rectangle (7.1,0.85);
      \fill[colB!85] (0.2,0.23) rectangle +(0.25,0.25);
      \node[anchor=west, font=\scriptsize] at (0.55,0.37) {\detlog};
      \fill[colA!85] (3.6,0.23) rectangle +(0.25,0.25);
      \node[anchor=west, font=\scriptsize] at (3.95,0.37) {\GREEDY};
    \end{scope}

  \end{tikzpicture}
  \caption{
    Number of accepted items of \GREEDY and \detlog depending on the degree of the sink node,
    for $|\sigma| = 30{,}000$ transactions generated by the merchant-bimodal distribution.
  }
  \label{fig:degree-sweep}
\end{figure}

\subsection{Methodology}
We simulate value transfers in graphs where nodes represent users and edges represent payment channels.
Any node can request a transfer to any other node.
The initial balance is always 0.
Our model is implemented on top of the Python NetworkX kit.

\subparagraph*{Baseline experiment}
For each request, two nodes are chosen uniformly at random as the source and the destination.
Then, a request size is chosen uniformly at random in a range $[1,B]$ or $[1,B/\ln B]$ depending on the considered model.
Here, $B$ is an upper bound dictated by the hash time locked contract (HTLC) request size policy and states of the channels adjacent to the source (since it knows the current balance of its own channels), and the HTLC request size policy and channel capacities of the destination (since the source can only know the predefined properties of third party channels). The minimum of these values is used.
A~path containing the least number of edges is searched between the source and the destination.
Only channels whose capacity and HTLC policy do not limit the tested request size are taken into consideration for computing the routing path.
However, this does not account for their state, as the sending party cannot know them.
If a path is found, we test it with the acceptance policy under evaluation: \GREEDY or \detlog.
If no such path exists, we try another transaction amount.
If all channels on the path accept the transfer, it is counted as successful.
\subparagraph*{Merchant scenario}
The scenario (\autoref{fig:degree-sweep}) simulates a simplified real life model, where a merchant node would be the destination of multiple transactions, most of them small. This node would have a degree (the number of adjacent open channels) that would put it between most individual users of the network and the central nodes of the exchanges and other infrastructure providers.
For each tested degree, the node with adjacency smaller or equal to that and the biggest single-edge capacity (the capacity of the biggest adjacent channel). Then, for each request, a random sender node would be uniformly chosen with the request size being: \textit{the small amount} with 15\% chance, fixed at 1000 $sat$ and the remaining 85\% taken uniformly at random from $[B_{min} / 2\ln B_{min}, B_{min} / \ln B_{min}]$ with $B_{min}$ being the capacity of the smallest edge to the merchant node. 
The routing works as in the \textit{Baseline}. Likewise, for the item to be accepted, it needs to pass through all the intermediary channels without violating their policy or capacity limits.

\subsection{Curating the Dataset}
For the actual Lightning Network topology, we use the Lightning Network Gossip dataset~\cite{lngossip}.
It is a collection of gossip messages (announcements of channel creation and updates) collected by the c-lightning observer node between 2018 and 2023.
Using the provided time machine tool and the newest dataset provided at the time of writing (gossip-20230924), we recreate a~view of the network from 23.09.2023.
For it to be usable in our simulations, we augment the edge (channel) data with the capacity information that is missing in the original dataset.
To this end, we use Mempool's public REST API \cite{mempool}.
Finally, to remove the snapshot's peculiarities like directed and duplicated edges, we removed multi-edges (channels with different SCID connecting the same pair of nodes) and merged double-edges having the same SCID, but present twice in the dataset, sometimes with different properties like HTLC policy (dictating maximum and minimum transaction size that could be routed through the node).
In such a case, the maximum of HTLC minimums and the minimum of HTLC maximums are taken, potentially narrowing the channel.
In the end, 6673 edges remained out of the original 7492 edges: we eliminated 23 multi-edges and 796 double-edges (only 131 out of them were not direct copies of themselves).

\subsection{Interpretation of the Results}

Our baseline results show that \detlog, despite its strong properties against adversarial transaction sequences, keeps the same performance as the natural, widely-used \GREEDY algorithm on common transactions, regardless of the network topology or the limitations on the transaction amounts.
\detlog therefore shows a good transaction throughput on daily transactions while ensuring a minimal throughput in any case.

The merchant scenario presents an advantageous case for \detlog. The directed nature of the transaction stream presents a more challenging case for balance management, as the balance cannot be restored by pure chance of getting a request in the opposite direction. In these conditions, the more careful \detlog gains advantage over \GREEDY. At the same time and with the current merchant selection, for the same number of requests, this effect is more pronounced when the degree of the receiving node, and coincidentally the combined capacity, is lower. Under an identical network topology, the higher number of approach channels with their additional capacity reduces the chance of failure due to their saturation.


\section{Conclusions}
This paper introduced and studied a fundamental model in decentralized finance to make cryptocurrencies scalable and able to process daily payments.
The problem translates into an~online variant of the knapsack problem in which items may have positive or negative sizes, subject to symmetric capacity constraints.
We derived a deterministic online algorithm with asymptotically optimal competitive ratio for this problem, and showed that the ratio is tight even for randomized algorithms.
Our simulations suggest that our algorithm, designed for worst-case scenarios, performs very well on average transaction sequences.
An important direction for future work is to extend the model from a single channel to the full network setting of payment channel systems, where transactions must be routed across multiple channels with independently evolving states.

\bibliography{references}

\end{document}